\pgfplotsset{compat=newest}
\theoremstyle{plain}
\newtheorem{theorem}{Theorem}
\newtheorem{corollary}{Corollary}
\newtheorem{lemma}{Lemma}
\renewenvironment{proof}{\noindent{\bf Proof:}\;}{$\square$\,}
\global\long\def\H{\textsc{H}}
\global\long\def\P{\textsc{P}}
\global\long\def\Z{\textsc{Z}}
\global\long\def\CNOT{\textsc{CNOT}}
\global\long\def\CZ{\textsc{CZ}}
\newcommand{\ket}[1]{{\left\vert{#1}\right\rangle}}
\newcommand{\qw}[1][-1]{\ar @{-} [0,#1]}
\newcommand{\qwx}[1][-1]{\ar @{-} [#1,0]}
\newcommand{\gate}[1]{*+<.6em>{#1} \POS ="i","i"+UR;"i"+UL **\dir{-};"i"+DL **\dir{-};"i"+DR **\dir{-};"i"+UR **\dir{-},"i" \qw}
\newcommand{\control}{*!<0em,.025em>-=-<.2em>{\bullet}}
\newcommand{\ctrl}[1]{\control \qwx[#1] \qw}
\newcommand{\targ}{*+<.02em,.02em>{\xy ="i","i"-<.39em,0em>;"i"+<.39em,0em> **\dir{-}, "i"-<0em,.39em>;"i"+<0em,.39em> **\dir{-},"i"*\xycircle<.4em>{} \endxy} \qw}
\newcommand{\lstick}[1]{*!R!<.5em,0em>=<0em>{#1}}
\newcommand{\Qcircuit}{\xymatrix @*=<0em>}
\begin{document}

\title{Optimized Aaronson-Gottesman stabilizer circuit simulation \\ through quantum circuit transformations\footnote{Subsumed by \href{https://arxiv.org/abs/1705.09176}{arXiv:1705.09176}.}}
\author{Dmitri Maslov}
\email{dmitri.maslov@gmail.com}
\affiliation{National Science Foundation, Arlington, VA 22230, USA}

\begin{abstract}
In this paper we improve the layered implementation of arbitrary stabilizer circuits introduced by Aaronson and Gottesman in {\it Phys. Rev. A 70(052328)}, 2004.  In particular, we reduce their 11-stage computation -H-C-P-C-P-C-H-P-C-P-C- into an 8-stage computation of the form -H-C-CZ-P-H-P-CZ-C-.  We show arguments in support of using -CZ- stages over the -C- stages: not only the use of -CZ- stages allows a shorter layered expression, but -CZ- stages are simpler and appear to be easier to implement compared to the -C- stages.  Relying on the 8-stage decomposition we develop a two-qubit depth-$(14n-4)$ implementation of stabilizer circuits over the gate library $\{\P,\H,\CNOT\}$, executable in the LNN architecture, improving best previously known depth-$25n$ circuit, also executable in the LNN architecture.  Our constructions rely on folding arbitrarily long sequences $($-P-C-$)^m$ into a 3-stage computation -P-CZ-C-, as well as efficient implementation of the -CZ- stage circuits.
\end{abstract}

\maketitle

\section{Introduction}
Stabilizer circuits are of particular interest in quantum information processing (QIP) due to their prominent role in fault tolerance \cite{ar:crss, www:g, bk:nc, ar:s}, the study of entanglement \cite{ar:bdsw, bk:nc}, and in evaluating quantum information processing proposals via randomized benchmarking \cite{ar:klr}, to name a few.  For the purpose of this paper, we define stabilizer circuits to be those composed with the single-qubit quantum Hadamard, \H$:=\frac{1}{\sqrt{2}}\left(\begin{array}{rr}
1 & 1\\
1 & -1
\end{array}\right)$, and Phase gates, $\P:=\begin{pmatrix} 1 & 0 \\ 0 & i \end{pmatrix}$, and the entangling \CNOT, \CNOT$:=\begin{pmatrix} 1 & 0 & 0 & 0 \\ 0 & 1 & 0 & 0 \\ 0 & 0 & 0 & 1 \\ 0 & 0 & 1 & 0 \end{pmatrix}$, and \CZ, \CZ$:=\begin{pmatrix*}[r] 1 & 0 & 0 & 0 \\ 0 & 1 & 0 & 0 \\ 0 & 0 & 1 & 0 \\ 0 & 0 & 0 & -1 \end{pmatrix*}$ gates. 
We reuse the notations from \cite{ar:ag}, and assume reader's familiarity with the concepts and results of this paper, as it is essential for understanding our result.  

The stabilizer circuits over $n$ qubits, such as defined above form a finite group.  It is equivalent to the proper size symplectic group, which is best seen through considering tableau representation developed in \cite{ar:ag}.  A square block matrix $M=\begin{pmatrix} A & B \\ C & D \end{pmatrix}$ is called symplectic iff the following three conditions hold: $A^TC=C^TA$, $B^TD=D^TB$, and $A^TD-C^TB=I$.  The $n$-qubit stabilizer group is equivalent to the group of binary $2n \times 2n$ symplectic matrices, $Sp(2n,F_2)$.  This reduction allows to effortlessly evaluate the stabilizer group size, through employing the well-known formula to calculate the number of elements in the respective symplectic group, $|Sp(2n,F_2)|={2^{n^2}\prod\limits_{j=1}^{n}(2^{2j}-1)}=2^{2n^2+O(n)}$. 

In this paper, we rely on the phase polynomial representation of $\{\P,\CNOT\}$ circuits.  Specifically, arbitrary quantum circuits over \textsc{P} and \textsc{CNOT} gates can be described in an alternate form, which we refer to as {\em phase polynomial description}, and vice versa, each phase polynomial description can be written as a \textsc{P} and \textsc{CNOT} gate circuit.  We use this result to induce circuit transformations via rewriting the respective phase polynomials.  We adopt the phase polynomial expression result from \cite{ar:ammr} to this paper as follows:

\begin{theorem}\label{thm:0}
Any circuit $C$ on $n$ qubits over $\{\P,\CNOT\}$ library with $k$ Phase gates can be described by the combination of a phase polynomial $p=f_1(x_1, x_2, ..., x_n) + f_2(x_1, x_2, ..., x_n) + \cdots + f_k(x_1, x_2, ..., x_n)$ and a linear reversible function $g(x_1, x_2, ..., x_n)$, such that the action of $C$ can be constructed as $C|x_1x_2...x_n\rangle = i^p|g(x_1, x_2, ..., x_n)\rangle$, where $i$ is the complex number $i$. Functions $f_j$ corresponding to the $j^{\text{th}}$ Phase gate are obtained from the circuit $C$ via devising Boolean linear functions computed by the \textsc{CNOT} gates in the circuit $C$ leading to the position of the respective Phase gate. 
\end{theorem}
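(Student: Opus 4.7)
The plan is to induct on the number of gates in $C$, maintaining the invariant that after processing the first $t$ gates the quantum state on computational basis input $\ket{x_1\cdots x_n}$ equals $i^{p_t}\ket{g_t(x_1,\ldots,x_n)}$, where $p_t$ is an integer-valued polynomial in $x_1,\ldots,x_n$ and $g_t\colon \mathbb{F}_2^n\to\mathbb{F}_2^n$ is a linear reversible function. The base case $t=0$ gives $p_0=0$ and $g_0=\mathrm{id}$, which matches the input state.

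For the inductive step I would analyze the $(t{+}1)$-st gate in two cases. If it is a $\CNOT$ with control $a$ and target $b$, its action on a basis state is the permutation $\ket{y_1,\ldots,y_n}\mapsto\ket{y_1,\ldots,y_b\oplus y_a,\ldots,y_n}$, which is itself linear and reversible, and it introduces no diagonal phase; therefore $g_{t+1}$ is the composition of this bit flip with $g_t$ (still linear reversible) and $p_{t+1}=p_t$. If instead the gate is a $\P$ on wire $j$, let $\ell_j^{(t)}(x_1,\ldots,x_n)$ denote the $j$-th component of $g_t$, i.e.\ the Boolean linear function presently carried by wire $j$. The phase gate acts diagonally, sending $\ket{\ell_j^{(t)}(x)}$ to $i^{\ell_j^{(t)}(x)}\ket{\ell_j^{(t)}(x)}$, so $g_{t+1}=g_t$ and $p_{t+1}=p_t+\ell_j^{(t)}(x_1,\ldots,x_n)$. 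Setting $f_m := \ell_j^{(t)}$ at the moment the $m$-th Phase gate is processed produces exactly the recipe in the theorem, since $\ell_j^{(t)}$ is, by construction, the Boolean linear function computed on wire $j$ by the CNOT gates that precede the $m$-th Phase gate in $C$.

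After all gates have been processed, the final data is $g := g_L$, which is linear and reversible as a composition of linear reversible maps, and $p = f_1+f_2+\cdots+f_k$, matching the statement. The only point worth flagging, rather than a real obstacle, is the distinction between the $\mathbb{F}_2$-linearity of each $f_m$ in the $x_i$ and the integer-valued summation defining $p$: each $f_m$ takes values in $\{0,1\}\subset\mathbb{Z}$, the sum $\sum_m f_m$ is an ordinary integer, and only its residue modulo $4$ matters since $i^4=1$. The conceptual content of the whole argument is simply that CNOT is a basis permutation and therefore transports any accumulated phase unchanged, while each Phase gate reads off and accumulates the current linear function on its wire.
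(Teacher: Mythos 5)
Your induction on the gates of $C$ is correct and complete: a \textsc{CNOT} permutes computational basis states without introducing any phase, so it only composes a linear reversible map onto $g_t$, while a Phase gate on wire $j$ is diagonal and adds the Boolean linear function currently carried by that wire to the integer-valued exponent $p_t$, which is exactly the recipe for the $f_m$ in the statement. Note that the paper itself supplies no proof of this theorem---it imports the statement from the cited reference of Amy, Maslov, Mosca, and Roetteler---and your gate-by-gate argument is the standard one establishing it, so there is nothing to correct or add.
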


In the following we focus on finding a short layered sequence of gates capable of representing an arbitrary stabilizer circuit over $n$ primary inputs.  The layers are defined as follows: 
\begin{itemize}
\item -H- layer contains all unitaries representable by arbitrary circuits composed of the Hadamard gates. Since $\H^2=Id$, and Hadamard gate is a single-qubit gate, -H- layer has zero or one gates acting on each of the respective qubits.  The number of distinct layers -H- on $n$ qubits is thus $2^n$.  We say -H- has $n$ Boolean degrees of freedom.
\item -P- layer is composed of an arbitrary set of Phase gates.  Since $\P^4=Id$, and Phase gate is also a single-qubit gate, -P- layer has anywhere between zero to three gates on each of the respective qubits.  Note that $\P^2=\Z$, and therefore the gate sequence $\P\P$ is may be better implemented as the Pauli-$\Z$ gate; $\P^3=\P^\dagger$, and frequently $\P^\dagger$ is constructible with the same cost as $\P$.  This means that -P- layer is essentially analogous to the -H- layer in the sense that it consists of at most $n$ individual single-qubit gates.  The number of different unitaries represented by -P- layers on $n$ qubits is $2^{2n}$.  We say -P- has $2n$ Boolean degrees of freedom.  
\item -C- layer contains all unitaries computable by the $\CNOT$ gates.  The number of different -C- layers corresponds to the number of affine linear reversible functions, and it is equal to $\prod\limits_{j=0}^{n-1}(2^n-2^j)=2^{n^2+O(n)}$ \cite{ar:pmh}.  We say -C- has $n^2+O(n)$ Boolean degrees of freedom.
\item -CZ- layer contains all unitaries computable by the $\CZ$ gates.  Since all $\CZ$ gates commute, and due to $\CZ$ being self-inverse, {\em i.e.}, $\CZ^2=Id$, the number of different unitaries computable by -CZ- layers is $\prod\limits_{j=1}^{n}2^{n-j}=2^{\frac{n^2}{2}+O(n)}$.  We say -CZ- has $\frac{n^2}{2}+O(n)$ Boolean degrees of freedom.
\end{itemize} 

Observe that the above count of the degrees of freedom suggests that -P- and -H- layers are ``simple''.  Indeed, each requires no more than the linear number of single-qubit gates to be constructed via a circuit.  The number of the degrees of freedom in -C- and -CZ- stages is quadratic in $n$.  Other than two-qubit gates often being more expensive than the single-qubit gates \cite{ar:deb,www:IBM}, the comparison of the degrees of freedom suggests that we will need more of the respective gates to construct each such stage.  The -CZ- layer has roughly half the number of the degrees of freedom compared to the -C- layer.  We may thus reasonably expect that the -CZ- layer can be easier to obtain.  

Unlike the -C- circuits, the -CZ- circuits have not been studied in the literature.  Part of the reason could be due to $\CZ$ gate complexity of the -CZ- circuits being a very inconspicuous to study problem: indeed, worst case optimal circuit has $\frac{(n-1)n}{2}$ $\CZ$ gates, and optimal circuits are easy to construct, as they are determined by the presence of lack of $\CZ$ gates acting on the individual pairs of qubits.  However, we claim that using only $\CZ$ gates to construct -CZ- layer is not the best solution, and a better approach would be to also employ the $\CNOT$ and $\P$ gates as well.  Indeed, both $\CNOT$ and $\CZ$ gates must have a comparable cost of the implementation, since they are related by the formula $\CNOT(a,b)=\H(b)\CZ(a,b)\H(b)$, and single-qubit gates are ``easy'' \cite{www:IBM, ar:deb}.  $\CZ$ is furthermore the elementary gate in superconducting QIP \cite{ar:ggz}, and as such, technically, costs less than the $\CNOT$, and in the trapped ions QIP (Quantum Information Processing) the costs of the two are comparable \cite{ar:m}.  Further discussion of the relation of implementation costs between -C- and -CZ- layers is postponed to Section \ref{sec:cvscz}.

The different layers can be interleaved to obtain stabilizer circuits not computable by a single layer.  A remarkable result of \cite{ar:ag} shows that 11 stages over a computation of the form -H-C-P-C-P-C-H-P-C-P-C- suffices to compute an arbitrary stabilizer circuit.  The number of Boolean degrees of freedom in the group of stabilizer unitaries, defined as the logarithm base-2 of their total count, is given by the formula $\log_2{|Sp(2n,F_2)|}=2n^2+O(n)$.  This suggests that the 11-stage circuit by Aaronson and Gottesman \cite{ar:ag} could be suboptimal, as it relies on $5n^2+O(n)$ degrees of freedom, whereas only $2n^2+O(n)$ are necessary.  Indeed, we find a shorter 8-stage decomposition of the form -H-C-CZ-P-H-P-CZ-C- relying on $3n^2+O(n)$ degrees of freedom. 

\section{$($-P-C-$)^m$ circuits}
In this section we show that an arbitrary length $n$-qubit computation described by the stages -P-C-P-C-...-P-C- folds into an equivalent three-stage computation -P-CZ-C-.

\begin{theorem}\label{thm:1}
$(-P-C-)^m=-P-CZ-C-$.
\end{theorem}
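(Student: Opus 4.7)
The plan is to apply Theorem \ref{thm:0} directly. It tells us that the action of the given $(-P-C-)^m$ circuit on a computational basis state is $\ket{x_1\ldots x_n}\mapsto i^{p(x)}\ket{g(x)}$, where $g$ is a linear reversible function and $p(x) = \sum_{j=1}^k L_j(x)$ is a sum of Boolean linear functions $L_j(x)=\bigoplus_{\ell\in S_j} x_\ell$, one term $L_j$ for each of the $k$ Phase gates. Since $i^4=1$, only the value of $p(x)$ modulo $4$ matters. I plan to show that $p(x) \pmod{4}$ always decomposes into a $\mathbb{Z}_4$-linear part plus twice a $\mathbb{Z}_2$-bilinear part; the former will be realized as a -P- layer, the latter as a -CZ- layer, and the linear reversible $g$ as the terminal -C- layer, yielding -P-CZ-C-.

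The technical heart of the argument is the polynomial identity
$$\bigoplus_{\ell\in S} x_\ell \;\equiv\; \sum_{\ell\in S} x_\ell \;+\; 2\!\!\sum_{\{\ell,\ell'\}\subseteq S}\!\! x_\ell x_{\ell'} \pmod{4},$$
valid for all $S\subseteq\{1,\ldots,n\}$ and all $x_\ell\in\{0,1\}$. Letting $s=\sum_{\ell\in S} x_\ell$ denote the integer Hamming weight of $(x_\ell)_{\ell\in S}$, the left side equals $s\pmod{2}$ and the right side simplifies to $s + s(s-1) = s^2$, so the identity reduces to the elementary observation that $s^2\equiv s\pmod{2}$ for every nonnegative integer $s$.

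Next I would apply this identity termwise in $p(x)=\sum_j L_j(x)$ and collect coefficients to obtain
$$p(x) \;\equiv\; \sum_{k=1}^n a_k x_k \;+\; 2\!\!\sum_{1\le k<l\le n}\!\! b_{kl}\, x_k x_l \pmod{4},$$
for some $a_k\in\mathbb{Z}_4$ and $b_{kl}\in\mathbb{Z}_2$. The factor $i^{a_k x_k}$ is produced by applying $\P^{a_k}$ to qubit $k$, so the entire $\mathbb{Z}_4$-linear part is implemented by a single -P- layer. The factor $i^{2 b_{kl} x_k x_l}=(-1)^{b_{kl} x_k x_l}$ is produced by a $\CZ$ between qubits $k$ and $l$ whenever $b_{kl}=1$; since all $\CZ$ gates commute, the entire bilinear part is implemented by a single -CZ- layer. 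Because both -P- and -CZ- layers are diagonal in the computational basis, the composite -P-CZ- acts as $\ket{x}\mapsto i^{p(x)}\ket{x}$, and appending a -C- layer implementing $g$ yields the desired $\ket{x}\mapsto i^{p(x)}\ket{g(x)}$.

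The main obstacle is establishing and correctly applying the mod-$4$ identity, which mediates between the $F_2$-arithmetic of each individual $L_j$ and the $\mathbb{Z}_4$-arithmetic of their sum; the mildly surprising outcome is that no cubic or higher-degree terms in $x$ ever appear, which is precisely what makes a -CZ- layer (as opposed to a more general diagonal gate) sufficient. A secondary point to verify is that the $b_{kl}$ can be taken modulo $2$, reflecting $\CZ^2=Id$, so each pair $(k,l)$ contributes at most one $\CZ$ gate to the resulting -CZ- layer.
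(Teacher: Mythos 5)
Your proof is correct, but it takes a genuinely different route from the paper's. The paper stays entirely within the phase-polynomial-over-linear-forms formalism of Theorem \ref{thm:0}: it uses the identity $i^{a\oplus b\oplus c}=i^{3a+3b+3c+3(a\oplus b)+3(a\oplus c)+3(b\oplus c)}$ to recursively replace each XOR term of support $s$ by six terms of support at most $s-1$, running an $(n-2)$-step reduction until every term is an XOR of at most two variables, and then implements each surviving term $u_{j,k}(x_j\oplus x_k)$ directly with $\P$ and $\CZ$ gates. You instead expand each XOR term in one shot into a $\mathbb{Z}_4$-valued multilinear polynomial via $\bigoplus_{\ell\in S}x_\ell\equiv\sum_\ell x_\ell+2\sum_{\ell<\ell'}x_\ell x_{\ell'}\pmod 4$, observing that no cubic or higher monomials arise and that the quadratic part carries an even coefficient, which is exactly what a $\CZ$ gate (phase $i^{2x_kx_l}$) provides. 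Your closed-form identity is more direct and arguably cleaner than the paper's recursion, and it makes transparent \emph{why} degree two suffices; the paper's version has the advantage of never leaving the representation in which terms are linear forms, which matches the statement of Theorem \ref{thm:0} and the way the -P-CZ- circuit is read off (as Phase gates applied to linear functions $x_j\oplus x_k$ rather than to monomials $x_jx_k$). One small point to tighten: the verification of your identity needs $s^2\equiv(s\bmod 2)\pmod 4$ (squares are $0$ or $1$ modulo $4$ according to parity), which is slightly stronger than the stated reduction to $s^2\equiv s\pmod 2$; both facts are elementary and true, so nothing breaks, but the justification as written undersells what is being used.
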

\begin{proof}
$($-P-C-$)^m$ circuit has $k\leq nm$ Phase gates.  Name those gates $\P_{j=1..k}$, denote Boolean linear functions they apply phases to as $f_{j=1..k}(x_1,x_2,...,x_n)$, and name the reversible linear function computed by \linebreak $($-P-C-$)^m$ (Theorem \ref{thm:0}) as $g(x_1,x_2,...,x_n)$. Phase polynomial computed by the original circuit is $f_1(x_1,x_2,...,x_n)+f_2(x_1,x_2,...,x_n)+...+f_k(x_1,x_2,...,x_n)$.  We will next transform phase polynomial to an equivalent one, that will be easier to write as a compact circuit.  To accomplish this, observe that $i^{a+b+c+(a\oplus b) + (a\oplus c) + (b\oplus c) + (a\oplus b \oplus c)} = i^4= 1$, where $a$, $b$, and $c$ are arbitrary Boolean linear functions of the primary variables.  This equality can be verified by inspection through trying all 8 possible combinations for Boolean values $a$, $b$, and $c$. The equality can be rewritten as 
\begin{equation}\label{phasetmpl}
i^{a\oplus b \oplus c} = i^{3a+3b+3c+3(a\oplus b)+3(a\oplus c)+3(b\oplus c)},
\end{equation} 
suggesting how it will be used.  The following algorithm takes $n-2$ steps.
\begin{enumerate}
\item[Step $n$.]  Consolidate terms in the phase polynomial $f_1(x_1,x_2,...,x_n)+f_2(x_1,x_2,...,x_n)+...+f_k(x_1,x_2,...,x_n)$ by replacing $uf_j(x_1,x_2,...,x_n)+vf_k(x_1,x_2,...,x_n)$ with $(u+v \bmod 4)f_j(x_1,x_2,...,x_n)$ whenever $f_j=f_k$.  Once done, look for $f_j=x_1 \oplus x_2 \oplus ... \oplus x_n$, being the maximal length linear function of the primary inputs.  If no such function found, move to the next step.  If it is found with a non-zero coefficient $u$, as an additive term $u(x_1 \oplus x_2 \oplus ... \oplus x_n)$, replace it by the equivalent 6-term mixed arithmetic polynomial $(4-u)x_1 + (4-u)x_2 + (4-u)(x_3 \oplus x_4 \oplus ... \oplus x_n) + (4-u)(x_1 \oplus x_2) + (4-u)(x_1 \oplus x_3 \oplus x_4 \oplus ... \oplus x_n) + (4-u)(x_2 \oplus x_3 \oplus ... \oplus x_n)$. This transformation is derived from (\ref{phasetmpl}) by assigning $a=x_1$, $b=x_2$, and $c=x_3 \oplus ... \oplus x_n$.  Consolidate all equal terms.  The transformed phase polynomial is equivalent to the original one in the sense of the overall combination of phases it prescribes to compute, however, it is expressed over linear terms with at most $n-1$ variables.
\item[Step $s$,] $s=(n-1)..3$. From the previous step we have phase polynomial of the form $u^\prime_1f^\prime_1(x_1,x_2,...,x_n)+u^\prime_2f^\prime_2(x_1,x_2,...,$ $x_n)+...+u^\prime_{k^\prime}f^\prime_{k^\prime}(x_1,x_2,...,x_n).$ By construction it is guaranteed that the functions $f^\prime_{j=1..k^\prime}$ EXOR no more than $s$ literals.  For each $f^\prime_j=x_{j_1} \oplus x_{j_2} \oplus ... \oplus x_{j_s}$, with the coefficient $u^{\prime}_j \not\equiv 0 \bmod 4$ replace this term with the sum of six terms, each having no more than $s-1$ literals by using the equation (\ref{phasetmpl}) and setting $a$, $b$, and $c$ to carry linear functions over the non-overlapping non-empty subsets of $\{x_{j_1}, x_{j_2}, ..., x_{j_s}\}$ whose union gives the entire set $\{x_{j_1}, x_{j_2}, ..., x_{j_s}\}$. Value $s=3$ marks the last opportunity to break down a term in the phase polynomial expression into a set of terms over smaller numbers of variables.  Upon completion of this step, the linear functions participating in the phase polynomial expression contain at most two literals each.
\end{enumerate}
The transformed phase polynomial description of the original circuit now has the following form: 
phase polynomial $\sum\limits_{j=1}^{n}u_jx_j + \sum\limits_{j=1}^{n}\sum\limits_{k=j+1}^{n}u_{j,k}(x_j \oplus x_k)$, where $u_{\cdot}, u_{\cdot,\cdot} \in \mathbb{Z}_4$, and the linear reversible function $g(x_1,x_2,...,x_n)$. We next show how to implement such a unitary as a -P-CZ-C- circuit, focusing separately on the phase polynomial and the linear reversible part. We synthesize individual terms in the phase polynomial as follows. 
\begin{itemize}
\item For $j=1..n,$ $u_jx_j$ is obtained as the single-qubit gate circuit $\P^{u_j}(x_j)$;
\item For $j=1..n$, $k=j+1..n$, $u_{j,k}(x_j \oplus x_k)$ is obtained as follows: 
\begin{itemize}
\item if $u_{j,k}\equiv 2 \bmod 4$, by the circuit $\P^2(x_j)\P^2(x_k) = \Z(x_j)\Z(x_k)$;
\item if $u_{j,k}\equiv 1 \text{ or } 3 \bmod 4$, by the circuit  $\P^{u_{j,k}}(x_j)\P^{u_{j,k}}(x_k) \CZ(x_j,x_k)$.
\end{itemize} 
\end{itemize}
The resulting circuit contains $\P$ and $\CZ$ gates; it implements phase polynomial $\sum\limits_{j=1}^{n}u_jx_j + \sum\limits_{j=1}^{n}\sum\limits_{k=j+1}^{n}u_{j,k}(x_j \oplus x_k)$ and the identity linear reversible function. Since all $\P$ and $\CZ$ gates commute, Phase gates can be collected on the left side of the circuit. This results in the ability to express phase polynomial construction as a -P-CZ- circuit.  We conclude the entire construction via obtaining the linear reversible function $g(x_1,x_2,...,x_n)$ as a -C- stage, with the overall computation described as a -P-CZ-C- circuit.
\end{proof}

Note that -P-CZ-C- can also be written as -C-P-CZ-, if one first synthesizes the linear reversible function $g(x_1,x_2,...,x_n)=(g_1(x_1,x_2,...,x_n),g_2(x_1,x_2,...,x_n),...,g_n(x_1,x_2,...,x_n))$, and expresses the phase polynomial in terms of degree-2 terms over the set $\{g_1,g_2,...,g_n\}$. Other ways to write such a computation include -CZ-P-C- and -C-CZ-P-, that are obtained from the first two by commuting -P- and -CZ- stages.

\begin{corollary}
-H-C-P-C-P-C-H-P-C-P-C- \cite{ar:ag} = -H-C-CZ-P-H-P-CZ-C-.
\end{corollary}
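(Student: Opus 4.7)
The plan is to apply Theorem~\ref{thm:1} twice to the two $(-P-C-)^{2}$ substrings sitting inside the Aaronson--Gottesman 11-stage decomposition, choosing the form of the three-stage replacement in each case so that the resulting stages collapse into the target 8-stage shape.

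Write the 11-stage expression as
\[
-H-\underbrace{-C-}_{\text{stage }2}-\underbrace{-P-C-P-C-}_{(-P-C-)^{2}}-\underbrace{-H-}_{\text{stage }7}-\underbrace{-P-C-P-C-}_{(-P-C-)^{2}},
\]
so two copies of $(-P-C-)^{m}$ with $m=2$ appear at stages 3--6 and 8--11. Theorem~\ref{thm:1} lets me fold each of them into three stages, and the remark following its proof gives four interchangeable orderings: $-P-CZ-C-$, $-C-P-CZ-$, $-CZ-P-C-$, and $-C-CZ-P-$. The key observation is that the $-H-$ at stage 7 does not commute with any of $\P$, $\CZ$, or $\CNOT$, so the only useful simplification after the two foldings will come from absorbing a $-C-$ appearing immediately after stage 2 into the existing $-C-$ of stage 2.

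Accordingly, for the first $(-P-C-)^{2}$ block I would use the $-C-CZ-P-$ form; this puts a $-C-$ next to the preceding stage~2 and places $-P-$ adjacent to the central $-H-$. For the second $(-P-C-)^{2}$ block I would use the original $-P-CZ-C-$ form, which places $-P-$ next to the same central $-H-$ on its right. The intermediate expression becomes
\[
-H-C-\bigl(-C-CZ-P-\bigr)-H-\bigl(-P-CZ-C-\bigr) \;=\; -H-C-C-CZ-P-H-P-CZ-C-,
\]
and the two adjacent $-C-$ stages merge into a single $-C-$ stage (composition of affine linear reversible functions is again affine linear reversible), yielding the desired
\[
-H-C-CZ-P-H-P-CZ-C-.
\]

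There is no real obstacle beyond bookkeeping: the only ``choice'' in the argument is to pick the right orderings among the four equivalent forms provided by Theorem~\ref{thm:1} so that a $-C-$ ends up adjacent to the existing $-C-$ at stage~2 while the two $-P-$ stages end up flanking the central $-H-$. Everything else is automatic from Theorem~\ref{thm:1} and the commutativity properties noted in its proof.
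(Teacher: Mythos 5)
Your proposal is correct and follows essentially the route the paper intends: the corollary is stated immediately after Theorem~\ref{thm:1} and the remark listing the four equivalent orderings of the folded three-stage block, and the intended derivation is exactly your folding of the two $(-\text{P-C-})^{2}$ substrings into $-\text{C-CZ-P-}$ and $-\text{P-CZ-C-}$ respectively, followed by merging the two adjacent $-\text{C-}$ stages. No gaps.
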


\section{-C- vs -CZ-}\label{sec:cvscz}

We have previously noted that $\CNOT$ and $\CZ$ gates have a comparable cost as far as their implementation within some QIP proposals is concerned.  In this section, we study  $\{\P, \CZ, \CNOT\}$ implementations of the stages -C- and -CZ-.  The goal is to provide further evidence in support of the statement that -CZ- can be thought of as a simpler stage compared to the -C- stage, and going beyond counting the degrees of freedom argument.

\begin{lemma}
Optimal quantum circuit over $\{\CZ\}$ library for a -CZ- stage has at most $\frac{n(n-1)}{2}$ $\CZ$ gates.
\end{lemma}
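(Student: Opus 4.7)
My plan is to prove the bound by showing that any $-\CZ-$ unitary admits a canonical form with at most one $\CZ$ gate per pair of qubits, which gives the claimed count directly.

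The first step is to observe that any two $\CZ$ gates commute. This is immediate from the fact that each $\CZ$ is a diagonal matrix in the computational basis, so two $\CZ$ gates (acting on possibly different pairs of qubits) commute as operators. Combined with the self-inverse identity $\CZ^2 = Id$, these two facts give the essential algebraic structure I need.

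Next I would take an arbitrary circuit $C$ over the $\{\CZ\}$ library implementing the given $-\CZ-$ stage. For each unordered pair $\{a,b\}$ of qubits with $a \neq b$, I would use commutativity to group together all occurrences of $\CZ(a,b)$ in $C$, and then use $\CZ(a,b)^2 = Id$ to cancel them in pairs. This leaves at most one $\CZ(a,b)$ gate for each pair $\{a,b\}$. Since there are exactly $\binom{n}{2} = \tfrac{n(n-1)}{2}$ such pairs, the resulting equivalent circuit uses at most $\tfrac{n(n-1)}{2}$ gates; this circuit is also a valid implementation of the same $-\CZ-$ stage since we only applied unitary identities. Hence the $\CZ$-count of the optimal circuit is upper-bounded by $\tfrac{n(n-1)}{2}$, as required.

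The only point that deserves a brief comment, rather than being a genuine obstacle, is the interpretation of ``optimal'': the statement asserts a worst-case upper bound over all $-\CZ-$ stages, and the canonicalization argument above establishes exactly this by producing, for every $-\CZ-$ unitary, a concrete circuit meeting the bound. I would not pursue matching lower bounds here, since the lemma only asks for the upper bound; however, it is worth noting in passing that the bound is tight, because the number of distinct $-\CZ-$ unitaries is $2^{n(n-1)/2}$ and each $\CZ$ gate contributes at most one independent bit to the canonical form, forcing some unitaries to require the full $\tfrac{n(n-1)}{2}$ gates.
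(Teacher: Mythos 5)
Your canonicalization argument (commute all $\CZ$ gates on the same pair together, cancel in pairs via $\CZ^2=Id$, leaving at most one gate per unordered pair) is correct and is exactly the reasoning the paper relies on, which it states only informally via the remark that all $\CZ$ gates commute and that a -CZ- layer is determined by the presence or absence of a $\CZ$ on each pair of qubits. Your added observation on tightness matches the paper's claim that the worst case requires the full $\frac{n(n-1)}{2}$ gates.
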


Indeed, all $\CZ$ gates commute, which limits the expressive power of the circuits over $\CZ$ gates.  However, once we add the non-commuting $\CNOT$ gate, and after that the Phase gate, the situation changes.  We can now implement -CZ- circuits more efficiently, such as illustrated by the following circuit identities. 
\begin{equation}\label{circ:1}
\Qcircuit @C=1.23em @R=1.23em @! {
& \qw 		& \qw 		& \qw 		& \ctrl{1}	& \ctrl{2}  & \ctrl{3}	& \ctrl{4}	& \qw \\
& \ctrl{1}  & \ctrl{2}	& \ctrl{3}	& \ctrl{-1}	& \qw 		& \qw 		& \qw 		& \qw \\
& \ctrl{-1} & \qw 		& \qw 		& \qw 		& \ctrl{-2} & \qw 		& \qw 		& \qw \\
& \qw  		& \ctrl{-2}	& \qw		& \qw 		& \qw  		& \ctrl{-3}	& \qw		& \qw \\
& \qw 		& \qw 		& \ctrl{-3}	& \qw 		& \qw 		& \qw 		& \ctrl{-4}	& \qw 
} 
\raisebox{-2.95em}{\hspace{1mm}=\hspace{1mm}}
\Qcircuit @C=0.65em @R=0.65em @! {
& \ctrl{1}	& \ctrl{1} 	& \qw 		& \qw 		& \qw 		& \ctrl{1} 	& \qw \\
& \ctrl{-1}	& \targ		& \ctrl{1}  & \ctrl{2}	& \ctrl{3}	& \targ		& \qw \\
& \qw 		& \qw		& \ctrl{-1} & \qw 		& \qw 		& \qw		& \qw \\
& \qw 		& \qw		& \qw  		& \ctrl{-2}	& \qw		& \qw		& \qw \\
& \qw 		& \qw		& \qw 		& \qw 		& \ctrl{-3}	& \qw		& \qw 
} 
\raisebox{-2.95em}{\hspace{1mm}=\hspace{1mm}}
\Qcircuit @C=-0.28em @R=-0.28em @! {
& \gate{\P}	& \ctrl{1} 	& \qw 				& \qw 		& \qw 		& \qw 		& \ctrl{1} 	& \qw \\
& \gate{\P}	& \targ		& \gate{\P^\dagger}	& \ctrl{1}  & \ctrl{2}	& \ctrl{3}	& \targ		& \qw \\
& \qw 		& \qw		& \qw 				& \ctrl{-1} & \qw 		& \qw 		& \qw		& \qw \\
& \qw 		& \qw		& \qw 				& \qw  		& \ctrl{-2}	& \qw		& \qw		& \qw \\
& \qw 		& \qw		& \qw 				& \qw 		& \qw 		& \ctrl{-3}	& \qw		& \qw 
} 
\end{equation}
The unitary implemented by the circuitry shown above requires $7$ $\CZ$ gates as a $\{\CZ\}$ circuit, $6$ gates as a $\{\CZ,\CNOT\}$ circuit, and only $5$ two-qubit gates as a $\{\P,\CZ,\CNOT\}$ circuit.  This illustrates that the $\CNOT$ and $\P$ gates are important in constructing efficient -CZ- circuits.

We may consider adding the $\P$ and $\CZ$ gates to the $\{\CNOT\}$ library in hopes of constructing more efficient circuits implementing the -C- stage.  However, this does not help.
\begin{lemma}
Any $\{\P, \CZ, \CNOT\}$ circuit implementing an element of the layer -C- using a non-zero number of $\P$ and $\CZ$ gates is suboptimal.
\end{lemma}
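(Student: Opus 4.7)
The plan is to extend the phase-polynomial description of Theorem~\ref{thm:0} from $\{\P,\CNOT\}$ to $\{\P,\CZ,\CNOT\}$ circuits, and then show that every $\P$ or $\CZ$ gate in a -C- circuit is redundant. Both $\P$ and $\CZ$ are diagonal in the computational basis, so neither alters the linear reversible function $g(x_1,\dots,x_n)$ that the circuit performs on basis labels: $g$ is determined entirely by the $\CNOT$ subcircuit. A $\P$ on wire $j$ contributes $+f_j \pmod 4$ to the phase polynomial $p$, where $f_j$ is the linear function currently carried on that wire. A $\CZ$ on wires $j,k$ applies phase $(-1)^{f_j f_k}$, which, via the Boolean identity $2ab \equiv a + b + 3(a \oplus b) \pmod 4$, adds $f_j + f_k + 3(f_j \oplus f_k)$ to $p$. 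Hence every $\{\P,\CZ,\CNOT\}$ circuit $C$ admits a description $C|x\rangle = i^{p_C(x)} |g_C(x)\rangle$ in the sense of Theorem~\ref{thm:0}.

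Next, suppose $C$ implements an element of -C-, i.e.\ a pure linear reversible unitary $U_{g_C}$ (up to an irrelevant global phase). Then $i^{p_C(x)}$ must be constant in $x$, so under the standard convention of ignoring global phase we may take $p_C \equiv 0$ as a function $\mathbb{F}_2^n \to \mathbb{Z}_4$. Let $C'$ be obtained from $C$ by deleting every $\P$ and $\CZ$ gate while keeping the $\CNOT$s in their original order. By the diagonal observation, $C'$ still computes $g_C$; its phase polynomial is identically zero; and hence $C'$ implements the same -C- element as $C$.

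Finally, if $C$ used a non-zero number of $\P$ or $\CZ$ gates, then $C'$ has strictly fewer total gates (and strictly fewer two-qubit gates whenever a $\CZ$ was present), so $C$ is not optimal under any reasonable cost model, which is the desired conclusion. The only slightly delicate point is the phase-polynomial bookkeeping for $\CZ$ and the pointwise vanishing of $p_C$ for a -C- element, but both follow directly from the diagonality of $\CZ$ and the definition of a -C- stage, and I do not expect a real obstacle.
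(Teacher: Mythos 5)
Your proposal is correct and follows essentially the same route as the paper: both express each $\P$ and $\CZ$ gate as a pure phase-polynomial contribution (with $\CZ$ on $y,z$ contributing $y+z+3(y\oplus z)$) over the identity linear reversible function, observe that deleting these diagonal gates leaves the linear reversible part untouched while zeroing the phase polynomial as required for a -C- element, and conclude that the shorter circuit implements the same unitary. Your added remark that the original circuit's phase polynomial must already be trivial (up to global phase) for it to realize a -C- element is a slightly more careful statement of a step the paper leaves implicit.
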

\begin{proof}
Each $\P$ gate applied to a qubit $x$ can be expressed as a phase polynomial $1\cdot x$ over the identity reversible linear function.  Each $\CZ$ gate applied to a set of qubits $y$ and $z$ can be expressed as a phase polynomial $y + z + 3(y \oplus z)$ and the identity reversible function.  Removing all $\P$ and $\CZ$ gates from the given circuit thus modifies only the phase polynomial part of its phase polynomial description.  Removing all $\P$ and $\CZ$ gates from the $\{\P, \CZ, \CNOT\}$ circuit guarantees that the phase polynomial of the resulting circuit equals to the identity, such as required in the -C- stage.  This results in the construction of a shorter circuit in cases when the original $\P$ and $\CZ$ gate count was non-zero.
\end{proof}

We next show a table of optimal counts and upper bounds on the number of gates it takes to synthesize most difficult function from the sets -CZ- and -C- for some small $n$, Table \ref{tab:main}.  Observe how the two-qubit gate counts for -CZ- stage, when constructed as a circuit over  $\{\P, \CZ, \CNOT\}$ library, remain lower than those for the -C- stage. 

\begin{table}[ht] 
\begin{tabular}{c|cc|cc} 
  & \multicolumn{2}{c|}{-CZ-} 		& \multicolumn{2}{c}{-C-} \\
n & $\{\CZ\}$ 	& $\{\P, \CZ,\CNOT\}$ 	& $\{\CNOT\}$ 	& $\{\P, \CZ,\CNOT\}$ \\ \hline
2 & 1			& 1					& 3				& 3 \\
3 & 3			& 3					& 6 			& 6  \\
4 & 6			& 5					& 9 			& 9  \\
5 & 10			& 7					& 12			& 12  \\
\end{tabular} 
\caption{Gate counts required to implement arbitrary -CZ- and -C- stages for some small $n$: optimal -CZ- stage gate counts as circuits over $\{\CZ\}$, upper bounds on the two-qubit gate count for -CZ- over $\{\P, \CZ,\CNOT\}$ (achieved based on the application of identities (\ref{circ:1}) applied to circuits with $\CZ$ gates), optimal $\{\CNOT\}$ and $\{\P, \CZ,\CNOT\}$ two-qubit gate counts for stage -C-.} \label{tab:main}
\end{table}

Literature encounters an asymptotically optimal algorithm \cite{ar:pmh} for $\{\CNOT\}$ synthesis of arbitrary -C- stage functions.  The gate complexity is $O\left(\frac{n^2}{\log n}\right)$.  It is possible that an asymptotically optimal algorithm for $\{\P, \CZ, \CNOT\}$ circuits implementing arbitrary -CZ- stage functions can be developed, at which point its complexity has to be $O\left(\frac{n^2}{\log n}\right)$.  To determine which of the two results in shorter circuits, one has to develop constants in front of the leading complexity terms. 

\vspace{1mm}
Gate count is only one possible metric of efficiency.  Two-qubit gate depth over Linear Nearest Neighbour (LNN) architecture is also a very important metric of efficiency.  This metric has been applied in \cite{ar:kms} to show an asymptotically optimal upper bound of $5n$ $\CNOT$ layers required to obtain an arbitrary -C- stage.  

Define $\widehat{\text{-CZ-}}$ to be -CZ- accompanied by the complete qubit reversal (defined as the linear reversible mapping $\ket{x_1x_2...x_n} \mapsto \ket{x_nx_{n-1}...x_1}$).  We next show that $\widehat{\text{-CZ-}}$ can be executed as a two-qubit depth-$(2n+2)$ computation over LNN.  This result will be used to reduce depth in the implementation of arbitrary stabilizer circuits.

\begin{theorem}\label{thm:2}
$\widehat{\text{-CZ-}}$ can be implemented as a $\CNOT$ depth-$(2n+2)$ circuit.
\end{theorem}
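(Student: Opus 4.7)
The plan is to implement $\widehat{\text{-CZ-}}$ via the odd-even transposition sort pattern of nearest-neighbor bricks on LNN: $n$ alternating brick-wall layers, where each brick is either a $\SWAP$ or a $\SWAP \cdot \CZ$ on a pair of adjacent physical positions. Since the complete reversal has exactly $\binom{n}{2}$ inversions and the sort resolves each with exactly one brick, every pair $(i,j)$ of original qubits passes through physical adjacency at a unique, identifiable brick. There we apply $\SWAP$ if the target -CZ- stage contains no $\CZ(i,j)$, and $\SWAP \cdot \CZ$ otherwise. The resulting schematic circuit is correct by construction.

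I would then observe that each brick can be realized with three two-qubit gates. $\SWAP$ is the standard $\CNOT(a,b) \cdot \CNOT(b,a) \cdot \CNOT(a,b)$. $\SWAP \cdot \CZ$ reuses the same three-$\CNOT$ skeleton with a few single-qubit $\P$ gates inserted between the $\CNOT$s so that, by Theorem~\ref{thm:0}, the linear-function component remains the swap and the phase-polynomial component becomes $x_a + x_b + 3(x_a \oplus x_b) \pmod 4$, which is the phase-polynomial representation of $\CZ(a,b)$. Single-qubit gates do not count toward two-qubit depth, so each brick contributes at most three two-qubit layers.

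To go from the naive brick-wise depth $3n$ down to the claimed $2n+2$, I would fuse $\CNOT$s across adjacent brick-wall layers rather than decomposing each brick in isolation. Two neighboring bricks in consecutive layers share a single qubit and jointly perform a three-cycle of qubits (optionally dressed with $\CZ$), which admits an LNN $\CNOT$ realization strictly shorter than two back-to-back three-$\CNOT$ $\SWAP$s. By systematically choosing, for each brick, the orientation of its three-$\CNOT$ decomposition (\textit{i.e.}, $\CNOT(a,b)\cdot\CNOT(b,a)\cdot\CNOT(a,b)$ versus $\CNOT(b,a)\cdot\CNOT(a,b)\cdot\CNOT(b,a)$) so that the last $\CNOT$ of one layer and the first $\CNOT$ of the next layer land on disjoint qubits, each internal brick-wall layer ends up contributing only two time slots, with the first and last layers together contributing a constant correction of $+2$; arithmetically, $3 + 2(n-1) + 1 = 2n+2$.

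The main obstacle is producing the explicit fused schedule uniformly across all $n$ layers. The orientation assignments must be globally consistent so that no qubit is double-booked in a time step and every $\CNOT$ respects LNN adjacency; this requires either an induction on $n$ or an explicit case analysis of the interaction between adjacent bricks, with particular attention to the two ends of the linear array and the first and last brick-wall layers, which is where the additive $+2$ concentrates.
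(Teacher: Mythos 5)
Your depth-$3n$ construction is sound as far as it goes: the odd--even transposition network for the reversal does bring every pair of original qubits into adjacency exactly once, and $\SWAP\cdot\CZ$ is indeed realizable on the three-$\CNOT$ $\SWAP$ skeleton with interspersed $\P$ gates (the intermediate wires carry $x_a$, $x_b$, $x_a\oplus x_b$, which is all the phase polynomial $x_a+x_b+3(x_a\oplus x_b)$ needs). The gap is entirely in the descent from $3n$ to $2n+2$, and the specific mechanism you propose cannot work: two bricks in consecutive layers act on qubit pairs $(i,i+1)$ and $(i+1,i+2)$, and \emph{every} $\CNOT$ inside a brick occupies both qubits of that brick, so no choice of control/target orientation can make a gate of the first brick and a gate of the second brick ``land on disjoint qubits.'' They always collide on qubit $i+1$, and the two-qubit depth of overlapping three-$\CNOT$ bricks stays at $3$ per layer. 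The arithmetic $3+2(n-1)+1$ is therefore not backed by a schedule that can exist.

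The way the paper actually reaches $2n+2$ is incompatible with your $\CZ$-at-adjacency mechanism, which is why a local fix will not suffice. The reversal is built from $n+1$ depth-$2$ stages of two-$\CNOT$ bricks (in the style of Kutin--Moulton--Smithline) rather than three-$\CNOT$ $\SWAP$s; the price is that intermediate wires no longer carry the original variables $x_i$ but interval functions $[j,k]=x_j\oplus x_{j+1}\oplus\cdots\oplus x_k$, so there is no moment at which applying a two-qubit $\CZ$ between adjacent wires yields $\CZ(x_i,x_j)$. The paper's key move is to abandon two-qubit gates for the phase part altogether: after the change of basis $y_j=x_1\oplus\cdots\oplus x_j$, the entire -CZ- phase polynomial is rewritten as $\sum u_j y_j+\sum u_{j,k}(y_j\oplus y_k)$ with $y_j\oplus y_k=[j,k]$, and one proves by a pattern/counting argument that each of the $\frac{n(n+1)}{2}$ functions $[j,k]$ appears on some wire exactly once during the depth-$(2n+2)$ reversal network, so single-qubit $\P^{u}$ gates inserted at those points (contributing nothing to two-qubit depth) realize the whole -CZ- stage. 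Without that rewriting and the accompanying ``every $[j,k]$ is generated'' argument, your construction proves depth $3n$, not $2n+2$.
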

\begin{proof}

\begin{figure}[t]
\centerline{
\Qcircuit @C=0.97em @R=0.97em @! {
&&&&& [6,\;\;] &&&&&& [\;\;, \;\;]	&&&&&& [\;\;,3] &&&&&&& \\
&&&&& [4,\;\;] &&&&&& [\;\;, \;\;]	&&&&&& [\;\;,3] &&&&&&& \\
&&&&& [4,\;\;] &&&&&& [6, 3]		&&&&&& [\;\;,5] &&&&&&& \\
&&&&& [2,\;\;] &&&&&& [4, 3]		&&&&&& [\;\;,5] &&&&&&& \\
\lstick{[1,1]}	& \ctrl{1}	& \qw 		& \targ		& \qw		& [2,3] &
				& \ctrl{1}	& \qw 		& \targ		& \qw		& [4,5] &
				& \ctrl{1}	& \qw 		& \targ		& \qw		& [6,7] & 
				& \ctrl{1}	& \qw 		& \targ		& \qw		& \qw & {[7,7]} \\
\lstick{[2,2]}	& \targ 	& \targ		& \ctrl{-1}	& \ctrl{1}	& [1,3] &
				& \targ 	& \targ		& \ctrl{-1}	& \ctrl{1}	& [2,5] &
				& \targ 	& \targ		& \ctrl{-1}	& \ctrl{1}	& [4,7] &
				& \targ 	& \targ		& \ctrl{-1}	& \ctrl{1}	& \qw & {[6,6]} \\
\lstick{[3,3]}	& \ctrl{1}	& \ctrl{-1}	& \targ		& \targ		& [1,5] &
				& \ctrl{1}	& \ctrl{-1}	& \targ		& \targ		& [2,7] &
				& \ctrl{1}	& \ctrl{-1}	& \targ		& \targ		& [4,6] & 
				& \ctrl{1}	& \ctrl{-1}	& \targ		& \targ		& \qw & {[5,5]} \\
\lstick{[4,4]}	& \targ 	& \targ		& \ctrl{-1}	& \ctrl{1}	& [3,5] &
				& \targ 	& \targ		& \ctrl{-1}	& \ctrl{1}	& [1,7] &
				& \targ 	& \targ		& \ctrl{-1}	& \ctrl{1}	& [2,6] & 
				& \targ 	& \targ		& \ctrl{-1}	& \ctrl{1}	& \qw & {[4,4]} \\
\lstick{[5,5]}	& \ctrl{1}	& \ctrl{-1}	& \targ		& \targ		& [3,7] &
				& \ctrl{1}	& \ctrl{-1}	& \targ		& \targ		& [1,6] &
				& \ctrl{1}	& \ctrl{-1}	& \targ		& \targ		& [2,4] &
				& \ctrl{1}	& \ctrl{-1}	& \targ		& \targ		& \qw & {[3,3]} \\
\lstick{[6,6]}	& \targ 	& \targ		& \ctrl{-1}	& \ctrl{1}	& [5,7] &
				& \targ 	& \targ		& \ctrl{-1}	& \ctrl{1}	& [3,6] &
				& \targ 	& \targ		& \ctrl{-1}	& \ctrl{1}	& [1,4] &
				& \targ 	& \targ		& \ctrl{-1}	& \ctrl{1}	& \qw & {[2,2]} \\
\lstick{[7,7]}	& \qw		& \ctrl{-1}	& \qw		& \targ		& [5,6] &
				& \qw		& \ctrl{-1}	& \qw		& \targ		& [3,4] &
				& \qw		& \ctrl{-1}	& \qw		& \targ		& [1,2] & 
				& \qw		& \ctrl{-1}	& \qw		& \targ		& \qw & {[1,1]} \\
&&&&& [\;\;,6] &&&&&& [5, 4]		&&&&&& [3,\;\;] &&&&&&& \\
&&&&& [\;\;,4] &&&&&& [5, 2]		&&&&&& [3,\;\;] &&&&&&& \\
&&&&& [\;\;,4] &&&&&& [\;\;, \;\;]	&&&&&& [5,\;\;] &&&&&&& \\
&&&&& [\;\;,2] &&&&&& [\;\;, \;\;]	&&&&&& [5,\;\;] &&&&&&& \\
&&&&& \downarrow \;\; \uparrow &&&&&& \downarrow \;\; \uparrow	&&&&&& \downarrow \;\; \uparrow &&&&&&&
}
}
\caption{Constructing $\widehat{\text{-CZ-}}$ for $n=7$. The circuit uses $m+1=\frac{n-1}{2}+1=4$ depth-$4$ stages $S$. Patterns $Pj$ and $Pk$ are $(6,4,4,2,2,1,1,3,3,5,5)$ and $(3,3,5,5,7,7,6,6,4,4,2)$, correspondingly.  Arrows $\downarrow$ and $\uparrow$ show the direction of 2-position shifts of the respective patterns. $[j,k]$ denotes linear function $x_j \oplus x_{j+1} \oplus ... \oplus x_k$ of primary variables, that accepts the application of Phase gates to, so long as contained to within the circuit.  A total of $4$ Phase gate stages is required; Phase gates can be applied to the individual literals selectively in the beginning or at the end of the circuit.}
\label{fig:ccz}
\end{figure}
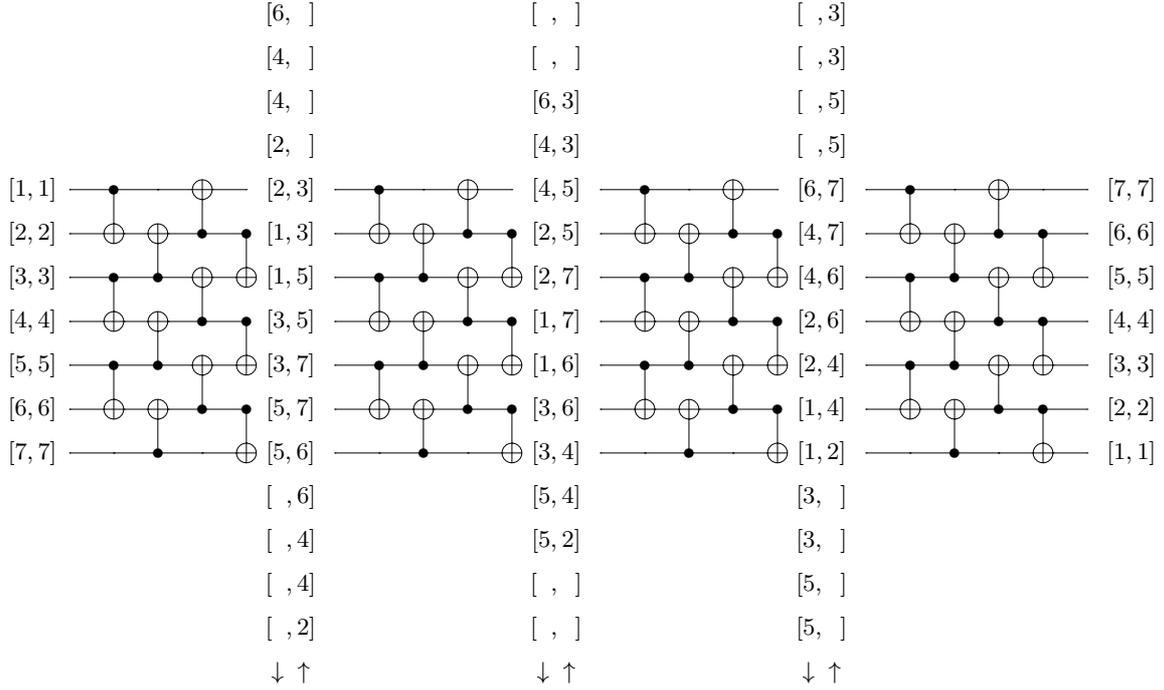

Consider phase polynomial description of the circuit $\widehat{\text{-CZ-}}$. However, rather than describe both parts of the expression, phase polynomial itself and the linear reversible transformation, over the set of primary variables, we will describe phase polynomial over variables $y_1,y_2,...,y_n$ defined as follows:
\[
y_1:=x_1, \\
y_2:=x_1 \oplus x_2, \\
..., \\
y_n:= x_1 \oplus x_2 \oplus ... \oplus x_n.
\] 
This constitutes the change of basis $\{x_1,x_2,...,x_n\} \mapsto \{y_1,y_2,...,y_n\}$.  Similarly to how it was done in the proof of Theorem \ref{thm:1}, we reduce phase polynomial representation of $\widehat{\text{-CZ-}}$ to the application of Phase gates to the EXORs of pairs and the individual variables from the set $\{y_1,y_2,...,y_n\}$, 
\begin{equation}\label{eq:alljk}
\sum\limits_{j=1}^{n}u_jy_j + \sum\limits_{j=1}^{n}\sum\limits_{k=j+1}^{n}u_{j,k}(y_j \oplus y_k),
\end{equation}
and the linear reversible function $g(x_1,x_2,...,x_n): \ket{x_1x_2...x_n} \mapsto \ket{x_nx_{n-1}...x_1}$.  Observe that $y_j \oplus y_k = x_j \oplus x_{j+1} \oplus ... \oplus x_k$, and thereby this linear function can be encoded by the integer segment $[j,k]$. The primary variable $x_j$ admits the encoding $[j,j]$.  We use this notation next.  In the following we implement the pair of the phase polynomial and the reversal of qubits (a linear reversible function) via a quantum circuit.

Observe that the swapping operation $g(x_1,x_2,...,x_n): \ket{x_1x_2...x_n} \mapsto \ket{x_nx_{n-1}...x_1}$ can be implemented as a circuit similar to the one from Theorem 5.1 \cite{ar:kms} in depth $2n+2$.  The rest of the proof concerns the ability to insert Phase gates in the circuit accomplishing the reversal of qubits such as to allow the implementation of each term in the phase polynomial $\sum\limits_{j=1}^{n}u_jy_j + \sum\limits_{j=1}^{n}\sum\limits_{k=j+1}^{n}u_{j,k}(y_j \oplus y_k)$.

Since our qubit reversal circuit is slightly different from the one used in \cite{ar:kms}, and we explore its structure more extensively, we describe it next.  It consists of $n+1$ alternating stages, $S_1$ and $S_2$, where 

\begin{tabular}{ll}
$S_1=$ & $\;\CNOT(x_1;x_2)\CNOT(x_3;x_4)...CNOT(x_{n-2};x_{n-1})$ \\ 
       & $\cdot\CNOT(x_3;x_2)\CNOT(x_5;x_4)...\CNOT(x_n;x_{n-1}) \text{ for odd } n \text{, and}$
\end{tabular}

\begin{tabular}{ll}
$S_1=$ & $\;\CNOT(x_1;x_2)\CNOT(x_3;x_4)...CNOT(x_{n-1};x_n)$ \\ 
       & $\cdot\CNOT(x_3;x_2)\CNOT(x_5;x_4)...\CNOT(x_{n-1};x_{n-2}) \text{ for even } n$ 
\end{tabular}

\noindent is a depth-$2$ circuit composed with the $\CNOT$ gates, and 

\begin{tabular}{ll}
$S_2=$ & $\;\CNOT(x_2;x_1)\CNOT(x_4;x_3)...CNOT(x_{n-1};x_{n-2})$ \\ 
& $\cdot\CNOT(x_2;x_3)\CNOT(x_4;x_5)...\CNOT(x_{n-1};x_{n})$ for odd $n$, and
\end{tabular}

\begin{tabular}{ll}
$S_2=$ & $\;\CNOT(x_2;x_1)\CNOT(x_4;x_3)...CNOT(x_{n};x_{n-1})$ \\
& $\cdot\CNOT(x_2;x_3)\CNOT(x_4;x_5)...\CNOT(x_{n-2};x_{n-1})$ for even $n$ 
\end{tabular}

\noindent is also a depth-$2$ circuit composed with the $\CNOT$ gates.  We refer to the concatenation of $S_1$ and $S_2$ as $S$.  The goal is to show that after $\lceil \frac{n}{2} \rceil$ applications of the circuit $S$ we are able to cycle through all $\frac{n(n+1)}{2}$ linear functions $[j,k]$, $j \leq k$. 

The remainder of the proof works slightly differently depending on the parity of $n$.  First, choose odd $n=2m+1$.  Consider two patterns of length $2n-3$,
\begin{eqnarray*}
Pj=(n-1,n-3,n-3,...,4,4,2,2,1,1,3,3,...,n-2,n-2) \text{ and} \\
Pk=(3,3,5,5,...,n,n,n-1,n-1,n-3,n-3,...6,6,4,4,2).
\end{eqnarray*} 
Observe by inspection that the $i^{\text{th}}$ linear function computed by the single application of the stage $S$ is given by the formula $[Pj(n-3+i),Pk(i)]$, where $Pj(l)$ and $Pk(l)$ return $l^{\text{th}}$ component of the respective pattern.  It may further be observed, via direct multiplication by the linear reversible matrix corresponding to the transformation $S$, that the $i^{\text{th}}$ component upon $t$ ($t\leq m$) applications of the circuit $S$ is computable by the following formula, $[Pj(n-1-2t+i),Pk(2t-2+i)]=[Pj(n-3-2(t-1)+i),Pk(2(t-1)+i)]$. A simple visual explanation can be given: at each application of $S$ pattern $Pj$ is shifted by two positions to the left (down, Figure \ref{fig:ccz}), whereas pattern $Pk$ gets shifted by two positions to the right (up, Figure \ref{fig:ccz}).

Observe that every $[j,k]$, $j=1..n, k=1..n, j\leq k$ is being generated. Indeed, a given $[j,k]$ may only be generated at most once by the $0$ to $m$ applications of the circuit $S$. This is because once a given $j$ meets a given $k$ for the first time, at each following step, the respective value $k$ gets shifted away from $j$ to never meet again.  We next employ the counting argument to show that all functions $[j,k]$ are generated.  Indeed, the total number of functions generated by $0$ to $m$ applications of the stage $S$ is $(m+1)n=\left(\frac{n-1}{2}+1\right)n=\frac{n(n+1)}{2}$, each linear function generated is of the type $[j,k]$ ($j=1..n, k=1..n, j\leq k$), none of which can be generated more than once, and their total number is $\frac{n(n+1)}{2}$.  This means that every $[j,k]$ is generated. 

We illustrate the construction of the circuit implementing $\widehat{\text{-CZ-}}$ for $n=7$ in Figure \ref{fig:ccz}.

For even $n=2m$ the construction works similarly. The patterns $Pj$ and $Pk$ are $(n,n-2,n-2,n-4,n-4,...,2,2,1,1,3,3...,n-3,n-3,n-1)$ and $(3,3,5,5,...,n-1,n-1,n,n,n-2,n-2,...,4,4,2,2)$, respectively. The formulas for computing linear function $[j,k]$ for $i^{\text{th}}$ coordinate after $t$ applications of $S$ is $[Pj(n-2t+i),Pk(2t-2+i)]$. After $m$ applications of the circuit $S$ we generate linear functions $x_n, x_{n-1}, ..., x_4, x_2$ in addition to the $m$ new linear functions of the form $[j,k]$ ($j<k$). 
\end{proof}

Circuit depth makes most sense when applied to measure depth across most computationally intensive operations.  In both of the two leading approaches to quantum information processing, and limiting the attention to fully programmable universal quantum machines, superconducting \cite{www:IBM} and trapped ions \cite{ar:deb}, the two-qubit gates take longer to execute and are associated with lower fidelity.  As such, they constitute the most expensive resource and motivate our choice to measure depth in terms of the two-qubit operations.  The selection of the LNN architecture to measure the depth over is motivated by the desire to restrict arbitrary interaction patterns to a reasonable set.  Both superconducting and trapped ions qubit-to-qubit connectivity patterns  \cite{www:IBM, ar:deb} are furthermore such that they allow embedding the linear chain in them.  

A further concern is that the two-qubit $\CNOT$ gate may not be native to a physical implementation, and therefore the $\CNOT$ implementation may likely use correcting single-qubit gates before and after using a specific two-qubit interaction.  This means that interleaving the two-qubit gates with the single-qubit gates such as done in the proof of Theorem \ref{thm:2} may not increase the depth, and restricting depth calculation to just the two-qubit stages is appropriate.  We did, however, report enough detail to develop depth figure over both single- and two-qubit gates for the implementations of stabilizer circuits relying on our construction. 

\begin{corollary}\label{col:2}
Arbitrary $n$-qubit stabilizer unitary can be executed in two-qubit depth $14n-4$ as a $\{\P, \H, \CNOT\}$ circuit over LNN architecture.
\end{corollary}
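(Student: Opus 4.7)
The plan is to start from the 8-stage decomposition $\text{-H-C-CZ-P-H-P-CZ-C-}$ established in Corollary 1 and plug in an LNN implementation of each stage separately. Since the $\text{-H-}$ and $\text{-P-}$ stages consist exclusively of single-qubit gates, they contribute zero to the two-qubit depth. Thus only the two $\text{-C-}$ stages and the two $\text{-CZ-}$ stages determine the overall two-qubit depth, and the four single-qubit layers are simply interleaved between them at no extra cost.

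For the two $\text{-CZ-}$ stages, I would invoke Theorem \ref{thm:2} and implement each one as $\widehat{\text{-CZ-}}$ at $\CNOT$-depth $2n+2$. This introduces a spurious qubit reversal around each $\text{-CZ-}$ stage, but a reversal is itself a linear reversible transformation, so if we rewrite $\text{-C-CZ-}$ as $\text{-C}'\text{-}\widehat{\text{-CZ-}}$ (and symmetrically for $\text{-CZ-C-}$ at the other end), the modified $\text{-C}'$ stage is still just an arbitrary linear reversible circuit. For the two $\text{-C}'$ stages I would then invoke the Kutin--Moulton--Smithline synthesis of \cite{ar:kms}, which realizes an arbitrary linear reversible function in LNN $\CNOT$-depth essentially $5n$. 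Adding everything up gives a nominal budget of $2\cdot 5n + 2\cdot(2n+2) = 14n+4$ two-qubit layers, with the $\text{-H-}$, $\text{-P-}$, $\text{-H-}$, $\text{-P-}$ layers slotted between them without additional two-qubit depth.

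To reach the sharper bound $14n-4$, I would next inspect the boundaries between adjacent two-qubit stages to recover a constant number of layers. The outermost $\CNOT$ layer of the Kutin--Moulton--Smithline linear reversible construction and the outermost $\CNOT$ layer of the $\widehat{\text{-CZ-}}$ construction from the proof of Theorem \ref{thm:2} each act on disjoint nearest-neighbour pairs, so at both of the two $\text{-C}'$--$\widehat{\text{-CZ-}}$ junctions they can be fused into a single layer. One can further shave a layer at the very beginning of the leading $\text{-C}'$ stage and at the very end of the trailing $\text{-C}'$ stage because on a chain of $n$ qubits the outermost qubit interacts only once in a reversal. Accumulating these boundary savings trims the nominal $14n+4$ down to $14n-4$, giving the claimed depth.

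The main obstacle is precisely this bookkeeping of constants. The stage-by-stage composition is immediate and would yield only $14n+\Theta(1)$; extracting the exact constant $-4$ requires verifying (i) that absorbing the reversal does not inflate the depth of the $\text{-C}'$ stage beyond that of a generic linear reversible circuit, and (ii) that the $\CNOT$ layers at the $\text{-C}'$--$\widehat{\text{-CZ-}}$ interfaces really do fuse into single LNN-admissible layers. Once these two points are checked, the corollary follows by direct composition of Corollary 1, Theorem \ref{thm:2}, and the LNN linear reversible synthesis of \cite{ar:kms}.
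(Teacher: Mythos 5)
Your overall architecture matches the paper's: start from the 8-stage form, implement each -CZ- stage as $\widehat{\text{-CZ-}}$ in two-qubit depth $2n+2$ via Theorem \ref{thm:2}, implement each -C- stage in depth $5n$ via \cite{ar:kms}, and arrive at the nominal count $2\cdot 5n+2\cdot(2n+2)=14n+4$. (One cosmetic difference: the paper disposes of the two spurious qubit reversals by letting them cancel against each other---the intervening -P-H-P- block consists of single-qubit gates, so the reversal passes through it up to a relabelling of wires---rather than absorbing each reversal into a neighbouring -C- stage; both routes are sound.)

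The genuine gap is in the final step from $14n+4$ down to $14n-4$: you must account for $8$ saved two-qubit layers, and the savings you enumerate neither add up to $8$ nor are they justified. Fusing one boundary $\CNOT$ layer at each of the two -C-/$\widehat{\text{-CZ-}}$ junctions saves at most $2$ layers, and the further claim that a layer can be shaved at the outer ends of the two -C- stages ``because the outermost qubit interacts only once in a reversal'' does not apply: those stages realize arbitrary linear reversible functions, not reversals, and \cite{ar:kms} offers no such end-layer discount for them. The paper's actual mechanism is coarser and stronger. The $\widehat{\text{-CZ-}}$ circuit of Theorem \ref{thm:2} consists of $\lceil n/2\rceil$ depth-$4$ $\CNOT$ blocks $S$ with Phase gates interleaved, and the Phase gates that would be placed before the first block (those acting on the individual literals $[j,j]$) can be relocated to the end of the circuit, where these literals reappear in reversed order. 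One may therefore peel the entire first depth-$4$ block $S$ off $\widehat{\text{-CZ$_1$-}}$ and merge it into the preceding -C$_1$- stage---the composition is still an arbitrary linear reversible function and hence still has depth $5n$---and symmetrically peel the last block off a reversed implementation of $\widehat{\text{-CZ$_2$-}}$ into the following -C$_2$- stage. This yields $2\times 4=8$ saved layers and the stated $14n-4$. Your proposal correctly identifies the constant as the crux but does not supply this argument.
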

\begin{proof}
Firstly, observe that -H-C-CZ-P-H-P-CZ-C $=$ -H-C$\widehat{\text{-CZ-}}$P-H-P$\widehat{\text{-CZ-}}$C-. This is because both $\widehat{\text{-CZ-}}$ stages reverse the order of qubits, and therefore the effect of the qubit reversal cancels out. The two-qubit depth of the -C- stage is $5n$ \cite{ar:kms}, and the two-qubit depth of the $\widehat{\text{-CZ-}}$ stage is $2n+2$, per Theorem \ref{thm:2}.  This means that the overall two-qubit depth is $14n+4$.  This number can be reduced somewhat by the following two observations.  Name individual stages in the target decomposition as follows, -H-C$_1$$\widehat{\text{-CZ$_1$-}}$P-H-P$\widehat{\text{-CZ$_2$-}}$C$_2$-. 
Using the construction in Theorem \ref{thm:2}, we can implement $\widehat{\text{-CZ$_1$-}}$ without the first $S$ circuit through applying Phase gates at the end of it (see Figure \ref{fig:ccz} for illustration).  The first $S$ circuit can then be combined with the -C$_1$- stage preceding it.  This results in the saving of $4$ layers of two-qubit computations.  Similarly, $\widehat{\text{-CZ$_2$-}}$ can be implemented up to $S$ if it is implemented in reverse, and phases are applied in the beginning (the end, but invert the circuit).  This allows to merge depth-$4$ computation $S$ with the stage -C$_2$- that follows.  These two modifications result in the improved depth figure of $14n-4$.  
\end{proof}

Observe how the aggregate contribution to the depth from both -CZ- stages used in this paper, $\sim \!4n$, is less than that from a single -C- stage, $5n$.  The result of \cite{ar:kms} can be applied to the 11-stage decomposition -H-C-P-C-P-C-H-P-C-P-C- of \cite{ar:ag} to obtain a two-qubit depth-$25n$ LNN-executable implementation of an arbitrary stabilizer unitary.  In comparison, our reduced 8-stage decomposition -H-C-CZ-P-H-P-CZ-C- allows execution in the LNN architecture in only $14n-4$ two-qubit stages.

\section{Discussion}
In this paper, we reduced the 11-stage computation -H-C-P-C-P-H-P-C-P-C- \cite{ar:ag} to an 8-stage computation of the form -H-C-CZ-P-H-P-CZ-C-.  The optimized implementation relies on the stage -CZ- not previously considered by \cite{ar:ag}.  We showed evidence that the -CZ- stage is likely superior to the comparable -C- stage.  Indeed, the number of the Boolean degrees of freedom in the -CZ- stage is only about a half of that in the -C- stage, two-qubit gate counts for optimal implementations of -CZ- circuits remain smaller than those for -C- circuits (see Table \ref{tab:main}), and -CZ- computations were possible to implement in a factor of $2.5$ less depth than that for -C- computations over LNN architecture.  

We reported a two-qubit depth-$(14n-4)$ implementation of stabilizer unitaries over the gate library $\{\P,\H,\CNOT\}$, executable in the LNN architecture.  This improves previous result, a depth-$25n$ circuit \cite{ar:ag,ar:kms} executable over LNN architecture.

Our 8-stage construction can be written in $16$ different ways, by observing that -C-CZ-P- can be written in $4$ different ways (-C-CZ-P-, -C-P-CZ-, -P-CZ-C-, and -CZ-P-C-). 

For the purpose of practical implementation we believe a holistic approach to the implementation of the 3-layer stage -P-CZ-C- may be due, where the linear reversible function $g(x_1,x_2,...,x_n)$ is implemented by the $\CNOT$ gates such that the intermediate linear Boolean functions generated go through the set that allows implementation of the phase polynomial part. 

\section{Acknowledgements}
I thank Dr. Yunseong Nam from the University of Maryland--College Park and Dr. Martin Roetteler from Microsoft Research--Redmond for their helpful discussions.

Circuit diagrams were drawn using qcircuit.tex package, \href{http://physics.unm.edu/CQuIC/Qcircuit/}{http://physics.unm.edu/CQuIC/Qcircuit/}.

This material was based on work supported by the National Science Foundation, while working at the Foundation. Any opinion, finding, and conclusions or recommendations expressed in this material are those of the author and do not necessarily reflect the views of the National Science Foundation.

\end{document}